\begin{document}

\title{Synthesis of Reversible Functions\\Beyond Gate Count and Quantum Cost}
\author{
Robert Wille$^\dag$, Mehdi Saeedi$^\S$, Rolf Drechsler$^\flat$ \\ \\
\small $^\dag$ Institute of Computer Science, University of Bremen, Bremen, Germany\\
\small E-mail: rwille@informatik.uni-bremen.de \\\\
\small $^\S$ Computer Engineering Department, Amirkabir University of Technology, Tehran, Iran\\
\small E-mail: msaeedi@aut.ac.ir \\\\
\small $^\flat$ Institute of Computer Science, University of Bremen, Bremen, Germany\\
\small E-mail: drechsler@uni-bremen.de \\
}
\date{}

\maketitle

\begin{abstract}
Many synthesis approaches for reversible and quantum logic have been proposed so far. However, most of them generate circuits with respect to simple metrics, i.e.~gate count or quantum cost. On the other hand, to physically realize reversible and quantum hardware, additional constraints exist. In this paper, we describe cost metrics beyond gate count and quantum cost that should be considered while synthesizing reversible and quantum logic for the respective target technologies. We show that the evaluation of a synthesis approach may differ if additional costs are applied. In addition, a new cost metric, namely Nearest Neighbor Cost (NNC) which is imposed by realistic physical quantum architectures, is considered in detail. We discuss how existing synthesis flows can be extended to generate optimal circuits with respect to NNC while still keeping the quantum cost small.
\end{abstract}

\newcounter {TCounter}
\newcounter {LCounter}
\newcounter {PCounter}
\newcounter {CCounter}
\newcounter {DCounter}
\newcounter {ECounter}

\newtheorem{theorem}[TCounter]{\textbf{Theorem}}
\newtheorem{lemma}[LCounter]{\textbf{Lemma}}
\newtheorem{proposition}[PCounter]{\textbf{Proposition}}
\newtheorem{corollary}[CCounter]{\textbf{Corollary}}
\newtheorem{definition}[DCounter]{\textbf{Definition}}
\newtheorem{example}[ECounter]{\textbf{Example}}
\newproof{pf}{Proof}

\newenvironment{remark}[1][\textbf{Remark}]{\begin{trivlist}
\item[\hskip \labelsep {\bfseries #1}]}{\end{trivlist}}

\newenvironment{changemargin}[2]{%
  \begin{list}{}{%
    \setlength{\topsep}{0pt}%
    \setlength{\leftmargin}{#1}%
    \setlength{\rightmargin}{#2}%
    \setlength{\listparindent}{\parindent}%
    \setlength{\itemindent}{\parindent}%
    \setlength{\parsep}{\parskip}%
  }%
  \item[]}{\end{list}}

\section{Introduction}

Power dissipation becomes an important issue for designing high performance digital circuits. While a significant part of energy dissipation is due to the non-ideal behavior of transistors and materials, the other part is caused by the fundamental Landauer principle introduced in 1961  \cite{Landauer61}. Landauer proved that using conventional (irreversible) logic gates leads to a certain energy dissipation regardless of the underlying technology. More precisely, energy is dissipated each time a bit is lost during the computation. 
In 1973, Bennett showed that to avoid power dissipation in a circuit, it must be built from reversible gates \cite{Bennett73}.
Currently, the field of reversible computing has received considerable attention in particular in low-power CMOS design~\cite{Schrom98, DD:2002}.

Furthermore, quantum gates are inherently reversible so that reversible logic builds the basis in the domain of quantum computation. 
It has been shown that quantum computing significantly improves the rate of advance in processing power for dedicated applications~\cite {Nielsen00}. 
For example the factorization problem cannot be executed on a classical Turing machine as efficiently as on a quantum computer. 
Since conventional CMOS technology suffers from the 
miniaturization and the exponential growth of the number of transistors, 
quantum computing received significant attention as a promising alternative in the last years.


However, quantum algorithms and reversible logic will finally need hardware for the implementation. So far, several technologies have been developed where each one has its own strengths and drawbacks (see e.g.~\cite{DD:2002,MaslovarXiv,Ross08}). In particular, current quantum technologies lead to completely new challenges for the researchers which need to be resolved for having a practical device. Thus, to reach a realizable hardware, synthesis approaches have to consider the physical limitations as well. Nevertheless, most of the currently available synthesis algorithms often only consider gate count and quantum cost 
(see e.g.~\cite {MDM:2005,MaslovTODAES07, SaeediICCAD07, Gupta06, Wille07, GrosseMVL08,WilleDAC09}). 
But depending on the application, the addressed technology, or the final quantum architecture other cost metrics beyond quantum cost and gate count have to be addressed as well.

In this paper, we first give 
an overview of technology-related metrics that should additionally be considered while synthesizing quantum or reversible logic. Thereby, we focus on metrics which can be modelled in such a way that they are still applicable to existing synthesis approaches. In doing so, the applicability of the available synthesis methods is ensured while the results can be optimized
with respect to constraints closer to the physical realization. 
Our analyses reveal that considering these metrics may lead to new conclusions about the performance of synthesis approaches in a respective domain. 

Second, we demonstrate how existing synthesis flows can be extended to optimize the resulting circuits with respect to one of these new cost criteria. To this end, we propose methods to optimize the \emph{Nearest Neighbor Cost} (NNC) for \emph{Linear Nearest Neighbor} (LNN) architectures where only adjacent qubits may interact with each other. LNN architectures are often considered as an appropriate approximation of a scalable quantum architecture (see e.g.~\cite{FDH:2004,Kut:2006,MaslovarXiv}). While NNC optimality can be achieved 
by applying additional SWAP gates, it increases the quantum cost by about one order of magnitude. Thus, we propose improvements
that reduce the resulting quantum cost by more than 50\% on average (83\% in the best case).
As a result, by considering a new technological constraint, a synthesis flow is suggested that goes beyond the previous synthesis paradigms.

The rest of this paper is organized as follows. In Section \ref{sec:prelim} basic concepts are introduced. In Section \ref{sec:cost} a comprehensive overview of technology-related metrics for the synthesis stage is given and discussed. The NNC-optimal synthesis flow for quantum logic is proposed in Section \ref{sec:nnc_synth} followed by the optimization methods presented in Section \ref{sec:opt}. Our experimental results are described in Section \ref{sec:exp}. Finally, we 
draw some conclusions and sketch future work in Section~\ref{sec:conclusion}. 

\section {Background} \label {sec:prelim}

\subsection{Reversible Logic}

A reversible function $f:\mathbb{B}^n\rightarrow \mathbb{B}^n$ over variables $X=\{x_1,\dots , x_n\}$
maps each input assignment to a unique output assignment. Such function must have the same number of input and output variables. A circuit realizing a reversible function is a cascade of reversible gates.
Common reversible gates include:
\begin{itemize}
\item A {\em multiple control Toffoli gate} $t_m$ \cite{Toffoli80} has the form $t_m(C, t)$, where $C = \{x_{i_1} , \dots , x_{i_m}\} \subset X$ is the set of control lines and $t = \{x_j\}$ with $C \cap t = \emptyset$ is the target line. 
The value of the target line is inverted iff all control lines are assigned to~1. For $m$=0 and $m$=1, the gates are called \textit{NOT} and \textit{CNOT}, respectively. For $m$=2, the gate is called \textit{C$^2$NOT} or \textit{Toffoli}.
\item A {\em multiple control Fredkin gate} $f_m$ \cite{Fredkin82} has two target lines and $m$ control lines. The gate interchanges the values of the target lines iff the conjunction of all $m$ control lines evaluates to 1. For $m$=0, the gate is called \textit{SWAP} gate.
\item A {\em Peres gate} $P$~\cite{Peres85} has one control line $x_i$ and two target lines $x_{j_1}$ and $x_{j_2}$, and 
is a $t_2(\{x_{i},x_{j_1}\}, x_{j_2})$ and a $t_1(\{x_{i}\}, x_{j_1})$ in a cascade.
\end{itemize}

In this paper, $n$ is particularly used to denote the number of inputs and outputs. 
Outputs (inputs) that are not required in the function specification are considered as \textit{garbage} (\textit{constant}) lines. 
The notations $n_{c}$ and $n_{g}$ are used as the number of constant inputs and number of garbage outputs, respectively.

\subsection{Decomposition to Quantum Logic}\label{sec:prelim_decomp}

Quantum logic is inherently reversible \cite{Nielsen00} and manipulates qubits rather than pure logic values. A \textit{qubit} is a two-level quantum system, described by a two-dimensional complex Hilbert space. The two orthogonal quantum states are used to represent the values 0 and~1. The state of a qubit for two pure logic states can be expressed as \mbox{$ | \Psi \rangle = \alpha |0 \rangle + \beta |1 \rangle$} (called \textit{superposition}), where
$\alpha$ and $\beta$ are complex numbers so that $|\alpha|^2 + |\beta|^2 = 1$.

Each Toffoli, Fredkin, or Peres gate can be decomposed into a quantum circuit composed of a sequence of \emph{elementary quantum gates}~\cite{Nielsen00} defined as follows: 
\begin{itemize}
\item Inverter (NOT): A single qubit is inverted.
\item Controlled-{NOT} (CNOT): The target qubit is inverted if the control qubit is 1.
\item Controlled-V: Performs the V operation known as the square root of NOT, since two consecutive V operations are equivalent to an inversion.
\item Controlled-V$^+$: Performs the inverse of V .
\end{itemize}

Fig.~\ref{fig:example_rev} shows a Toffoli and a Fredkin gate in a cascade. The resulting (decomposed) quantum circuit is depicted in Fig.~\ref{fig:example_decomp}.

\begin{figure}
\centering
\subfigure[Circuit\label{fig:example_rev}]{
\centering
\includegraphics[scale=.23]{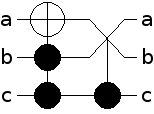}
}~~\subfigure[Decomposition (quantum circuit)\label{fig:example_decomp}]{
\centering
\includegraphics[scale=.23]{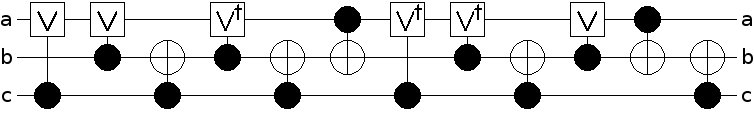}
}
\caption{Reversible circuit and its decomposed quantum circuit}
\end{figure}

\section {Cost Metrics} \label {sec:cost}
In order to address various technical constraints in the synthesis stage, metrics have been proposed that are used to evaluate the respective results. Among them, gate count and quantum cost received most attractions and are widely applied (e.g. see \cite {MDM:2005,MaslovTODAES07, SaeediICCAD07, Gupta06, Wille07, GrosseMVL08,WilleDAC09}). 
However, other cost metrics imposed by target technologies exist and need to be considered.

The aim of this section is to introduce those metrics which can be used to guide a synthesis tool to construct realizable circuits. It is worth noting that there are other (physical) metrics like duration of pulse sequences and number of traps that their considerations need some interactions between various steps of the design cycle of quantum and reversible circuits. However, the focus of this paper is on those cost metrics that can be used in the synthesis step.

While the introduced metrics are related to the physical realizations of quantum and reversible logic more than the ones have been applied so far, they can be simply used by the available synthesis tools too. As the experimental results show such considerations can lead to a different (and more realistic) evaluation of synthesis approaches.

\subsection{Number of Lines and Constant Inputs}

Number of lines $n$, constant inputs $n_{c}$, and garbage outputs $n_{g}$ are common metrics to measure the quality of synthesized circuits. Even if these measures are negligible for reversible CMOS technologies (see e.g.~\cite{DD:2002}) they have high importance in quantum computation
where each qubit must be represented by a physical entity
that supports
distinct and superposition states.
Currently, such entities are not arbitrarily available in quantum technologies\footnote{Current systems contain e.g.~12~\cite{Scientist} or 28 qubits~\cite{DWave}, respectively.}. Besides that, initializing quantum registers
cannot be simply done because of the exponential state-space of an $n$-qubit register (see the method proposed in \cite{Shende06}). Thus, synthesis approaches should keep track of $n_{c}$, $n_{g}$, and $n$ if quantum circuits are particularly addressed.

\subsection{Gate Count and Quantum Cost}

Number of gates has been used to evaluate nearly all synthesis approaches so far. For an arbitrary circuit $C$ with $k$ gates $g_{1}$, $g_{2}$, $\cdots$, $g_{k}$, the \textit{gate count} metric is denoted as $gc$ and defined as $gc=k$. Besides that, \emph
{quantum cost} $qc$ are used to measure the implementation cost of quantum circuits. 
More precisely, quantum cost is defined as the number of elementary quantum operations needed to realize a gate (see Table~\ref{QuantumCost}).
In addition, the quantum cost for a circuit is defined as $qc=\sum_{i=1}^{k} qc_{g_{i}}$.
Obviously, quantum cost should be considered in particular if quantum circuits are addressed.
\begin{table}[!t]
\caption{Quantum costs of various gates}
\label{QuantumCost}
\centering
\begin{tabular}{|c|c|}
\hline
Gate Type & Quantum Cost \\
\hline
\hline
$t_0,t_1$ (NOT, CNOT) & 1 \cite{Nielsen00}\\
\hline
$t_2$ (Toffoli) & 5 \cite{Barenco95}\\
\hline
$t_{m}$ $(3 \leq m \leq \left\lceil\frac{n}{2}\right\rceil)$ & $12 \times m - 22$ \cite{MaslovDATE05} \\ 
\hline
$t_{m}$ $(\left\lceil\frac{n}{2}\right\rceil+1 \leq m \leq n-2 )$ & $24\times m - 64$ \cite{MaslovDATE05}\\
\hline
$t_{n-1}$ & $ 2^{n} - 3$ \cite{Barenco95}\\
\hline
$f_0$ (SWAP) & 3 \cite{Nielsen00}\\
\hline
$f_m$ $(1 \leq m \leq n-2)$ & $2+Cost_{t_{m+1}}$ \cite{Nielsen00}\\
\hline
Peres & 4 \cite{Peres85}\\
\hline
\end{tabular}
\end{table}


\subsection{Transistor Cost}
In order to realize Toffoli or Fredkin gates on reversible CMOS technologies (as done e.g.~in \cite{DD:2002}), a number of transistors is required where its complexity depends on the number of control lines of the respective gate. This metric is denoted as $TrC$ in the rest of the paper. It is defined by $8\cdot m$ where $m$ is the number of control lines of a given gate \cite{ar:TG08}. The $TrC$ of a circuit is defined as the sum of the $TrC$s of its gates.




\subsection{Nearest Neighbor Cost (NNC)}

Although most quantum algorithms presume that interaction between arbitrary qubits is possible with no extra cost, some restrictions exist in real quantum technologies. As an example, in a \textit{Linear Nearest Neighbor} (LNN) architecture only adjacent qubits are allowed to interact\footnote{The LNN architecture is often considered as an appropriate approximation to a scalable quantum architecture. If one can show that a circuit can be efficiently reorganized to be executed in the LNN architecture, such a circuit could be run efficiently in many other architectures~\cite{FDH:2004,Kut:2006,MaslovarXiv}.}. Hence, gates of the same type (e.g.~all $\left(^{n}_{2}\right)\cdot(n-2)$ 3-bit Toffoli gates) do not necessarily have the same implementation cost.
To measure this, a new cost metric denoted \emph{NNC} is introduced:

Consider a 2-qubit quantum gate $g$ where its control and target are placed at the $c^{th}$ and $t^{th}$ line ($0\leq c,t <n$), respectively. The NNC of $g$ is defined as $|c-t-1|$, i.e. distance between control and target lines. The NNC of a circuit is defined as the sum of the NNCs of its gates. Optimal NNC for a circuit is 0 where all quantum gates are either 1-qubit or 2-qubit gates performed on adjacent qubits.

{\setlength{\tabcolsep}{3pt}
\begin{table*}[t]
\caption{Comparison of Different Cost Metrics}\label{table:metrics}
\scriptsize
\begin{changemargin}{-3cm}{-3cm}     
\begin{tabular}{|r|rr|rr|rr|rr|rr|rr|rr|rr|rr|}

\hline
  Benchmark         &\multicolumn{ 2}{|c|}{$n$}&\multicolumn{ 2}{|c|}{$n_g$}  &   \multicolumn{ 2}{|c|}{$n_c$} & \multicolumn{ 2}{|c|}{$gc$} & \multicolumn{ 2}{|c|}{$qc$}& \multicolumn{ 2}{|c|}{NNC}& \multicolumn{ 2}{|c|}{$Depth$}&  \multicolumn{ 2}{|c|}{$Dis_{avg}$} &   \multicolumn{ 2}{|c|}{$TrC$} \\

           &        RMS &        BDD &        RMS &        BDD &        RMS &        BDD &        RMS &        BDD &        RMS &        BDD &        RMS &        BDD &        RMS &        BDD &        RMS &        BDD &        RMS &        BDD  \\
\hline

 4mod5\_8   &    {\bf 5} &          7 &    {\bf 4} &          6 &    {\bf 1} &          3 &          8 &          8 &   {\bf 12} &         24 &   {\bf 16} &         41 &   {\bf 10} &         24 &  {\bf 4,6} &       6,71 &  {\bf 64} &        88 \\

   alu\_9   &    {\bf 5} &          7 &    {\bf 4} &          6 &    {\bf 0} &          2 &         13 &    {\bf 9} &         45 &   {\bf 29} &   {\bf 39} &         45 &         42 &   {\bf 28} &       18,2 & {\bf 8,14} &        152 &  {\bf 104} \\

 decod24\_10   &    {\bf 4} &          6 &    {\bf 0} &          2 &    {\bf 2} &          4 &         18 &   {\bf 11} &         86 &   {\bf 27} &         88 &   {\bf 33} &         82 &   {\bf 26} &       35,8 &  {\bf 8,5} &        240 &  {\bf 96} \\

 hwb9\_65   &   {\bf 9} &        170 &    {\bf 0} &        161 &    {\bf 0} &        161 &       2223 &  {\bf 699} &      23178 & {\bf 2275} & {\bf 43624} &     118639 &      18022 & {\bf 1997} &     4194,6 & {\bf 26,62} &      31136 & {\bf 8544} \\

%
%
%
%
\hline

\end{tabular}
\end{changemargin}
\end{table*}
} 

\subsection{Circuit Depth}

Consider two consecutive gates $g_{i}$ and $g_j$ with control sets $C_{i}$ and $C_{j}$ and with target sets $T_{i}$ and $T_{j}$, respectively. These gates can \textit{concurrently} be applied if $C_{i} \cap T_{j}=\emptyset$ and $C_{j} \cap T_{i}=\emptyset$ ($C_{i}$ and $C_{j}$ may have some common elements). Suppose a quantum circuit $C$ with $k$ elementary gates. Assume that $C$ contains $m$ subcircuits with concurrent gates where the $i^{th} (1\leq i \leq m)$ subcircuit contains $k_i$ concurrent elementary gates. The circuit depth $Depth$ is defined as the number of steps required to execute all available gates in a circuit, i.e. $Depth=k - \sum\limits_{i = 1}^m {(k_i - 1)}$.

Since the time which a qubit can keep its quantum state (the \textit{coherence time}) and the time needed to perform a gate (the \textit{gate operation time}) may vary from one technology to another (for example see Table III of \cite{Meter06}), considering the circuit depth at the synthesis stage for quantum circuits is vital. Despite the fact that quantum algorithms already exploit algorithmic parallelism to increase the processing speed, synthesis algorithms should produce concurrent gates for efficient quantum circuit implementation.

\subsection{Gate Distribution}
As mentioned above, coherence time for qubits and operation time for gates are widely affected by technological parameters.
It can be verified that the total operation time of gates applied to a qubit must be less than its qubit de-coherence time; otherwise the qubit value is lost before applying all gates. The number of elementary gates applied to the $i^{th}$ qubit of a quantum circuit is denoted as $Dis_{i}$. Average, minimum, and maximum number of applied elementary gates are denoted as $Dis_{avg}$, $Dis_{min}$, and $Dis_{max}$, respectively. By considering $Dis_{min}$ and $Dis_{max}$, designers may want to balance the distributions of all qubits for quantum circuits.


\subsection {Discussion}

As an example of the various cost metrics,
Fig.~\ref {sampleCircuit} shows a circuit with $n=5$, $n_c=1$, $n_g=4$, $gc=7$, $qc=7$, NNC = 9, $Depth=4$, $Dis_{avg}=2.6$, $Dis_{min}=2$, $Dis_{max}=3$, and $TrC = 48$.
Although, gate count ($gc$) and quantum cost ($qc$) have been extensively used for the evaluation of circuit quality so far, consideration of additional metrics as introduced above may lead to different judgements about the quality of a circuit.
To illustrate this, Table \ref{table:metrics} compares different costs of some benchmark circuits obtained from two different synthesis approaches, namely the Reed-Muller spectra approach~\cite{MaslovTODAES07} (denoted by \textsc{RMS}) and the BDD-based method~\cite{WilleDAC09} (denoted by \textsc{BDD})\footnote{We like to thank the authors of both approaches for making us their tool available.}.

\begin{figure}[tb]
	\centering
		\includegraphics[scale=0.7]{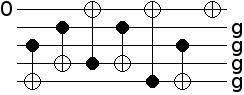}
\caption{Sample circuit}
	\label{sampleCircuit}
\end{figure}

As shown in Table \ref{table:metrics}, if only gate count and quantum cost are used for the evaluation, one must conclude that the BDD-based approach leads to the best results. However, if further cost metrics are considered the results become very different. For example, the RMS approach leads to better realizations in terms of NNC and number of lines.
That is, if quantum circuits should be built, the results obtained by this approach are better.
In contrast, if reversible CMOS circuits are addressed, the number of lines are negligible and NNC can be ignored. Since additionally the transistor count is lower, the circuits obtained by the BDD-based method are the better choice for this domain.

Altogether, cost metrics beyond gate count and quantum cost should be used to control the synthesis so that circuits with special properties (e.g.~few circuit lines, an NNC of 0, etc.) result. 
But the consideration of various constraints at a single step complicates the respective algorithms. Nevertheless, it may be acceptable to use a multi-stage design flow that successively address all required constraints. Thereby, approaching important metrics at earlier stages of a design flow is preferred.
In the rest of the paper, we show how an existing design flow can be adjusted so that a further cost criterion (namely NNC) is supported.

\section {NNC-optimal Synthesis of Quantum Logic} \label {sec:nnc_synth}

Quantum circuits can be synthesized using multiple control Toffoli gates first that are afterwards mapped to elementary quantum gates. On the other hand, elementary quantum gates can be directly applied during the synthesis process. While for the latter case, only small circuits have been determined so far (e.g.~see \cite{Shende06, Hung06,GrosseMVL08}), approaches for Toffoli network synthesis can handle larger functions and circuits
(e.g.~see~\cite {MDM:2005,MaslovTODAES07, SaeediICCAD07, Gupta06, Wille07, WilleDAC09}). 
However, in both cases often only the gate count or the quantum cost are used as cost metric. In this section, we show how these synthesis flows can be extended to additionally address NNC.

Current decomposition algorithms may lead to non-optimal circuits with respect to NNC. As an example, Fig. \ref{fig:decomp}~(a) shows the standard decomposition of a Toffoli gate leading to an NNC value of 1.
The general idea of our NNC optimization is to apply adjacent SWAP gates whenever a non-adjacent quantum gate occurs in the standard decomposition. More precisely, SWAP gates are added in front of each gate $g$ with non-adjacent control and target lines to ``move'' a control (target) line of $g$ towards the target (control) line until they become adjacent. Afterwards, SWAP gates are added to restore the original ordering of circuit lines. In total, this leads to additional quantum cost given by the following lemma:

\begin{lemma} \label{NNTheorem}
Consider a quantum gate $g$ where its control and target are placed at the $c^{th}$ and $t^{th}$ lines, respectively. Using adjacent SWAP gates as proposed, additional quantum cost of $6 \cdot |c-t-1|$ are needed.
\end{lemma}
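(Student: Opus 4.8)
The plan is simply to count the adjacent SWAP gates that the proposed construction inserts around $g$ and to multiply by the quantum cost of a single SWAP. First I would set $d = |c-t-1|$; by the definition of NNC given in Section~\ref{sec:cost}, this is exactly the number of circuit lines lying strictly between line $c$ and line $t$, i.e.\ the number of lines separating the control from the target. The construction places adjacent SWAP gates in front of $g$ to ``move'' one of its lines towards the other. Since each adjacent SWAP exchanges a line with a single neighbouring line, and since $d$ such exchanges bring the control and target onto adjacent lines, exactly $d$ SWAP gates are used in this first phase. Once $g$ has been applied --- now as a legal 2-qubit gate acting on adjacent lines, contributing NNC $0$ --- the same $d$ SWAPs are applied in reverse order to restore the original ordering of all circuit lines, contributing a further $d$ SWAP gates, for a total of $2d$.

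Next I would invoke Table~\ref{QuantumCost}, which assigns an $f_0$ (SWAP) gate a quantum cost of $3$. Hence the $2d$ inserted SWAPs contribute an additional quantum cost of $3 \cdot 2d = 6d = 6 \cdot |c-t-1|$, as claimed. I would also note that the bound is symmetric under exchanging the roles of control and target: one may move the control towards the target or the target towards the control, but either way exactly $d$ neighbour-crossings occur, so the result depends only on the gap $|c-t-1|$ and not on which line is chosen to be moved. Since all inserted SWAPs themselves act on adjacent lines, they add nothing to the NNC, so the accounting closes with exactly $6d$ extra quantum cost and no residual terms.

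The step I expect to require the most care is the claim that the restoring phase genuinely costs another $d$ SWAPs rather than something smaller. Sufficiency of $d$ SWAPs per phase is immediate --- each SWAP reduces (respectively increases) the control--target gap by one --- but for the lemma to hold for $g$ considered \emph{in isolation} one must restore the line order fully; in a larger circuit, some of these restorations could be cancelled against the SWAPs needed by neighbouring gates, but that is precisely the optimization treated in Section~\ref{sec:opt} and lies outside this per-gate bound. A minor additional point worth stating is the boundary case: if $c$ and $t$ are already adjacent then $d = 0$, no SWAPs are inserted, and the formula correctly yields additional cost $0$.
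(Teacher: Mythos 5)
Your proof is correct and follows essentially the same argument as the paper: $|c-t-1|$ adjacent SWAPs to bring control and target together, the same number to restore the original line order, and a cost of $3$ per SWAP from Table~\ref{QuantumCost}, giving $6\cdot|c-t-1|$. Your additional remarks on symmetry, the $d=0$ boundary case, and the possibility of cancelling restorations across gates are sensible elaborations but not needed for the lemma itself.
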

\begin{proof}
In total, $|c-t-1|$ adjacent SWAP operations are required to move the control line to the target, so that both become adjacent. Another $|c-t-1|$ SWAP operations are needed to restore the original ordering. Considering quantum cost of 3 for each SWAP operation, this leads to the additional quantum cost of $6 \cdot |c-t-1|$.
\end{proof}

By applying this method consecutively to each non-adjacent gate, a quantum circuit with NNC of 0 can be determined in linear time.


\begin{example}
Consider the standard decomposition of a Toffoli gate as depicted in Fig.~\ref{fig:decomp}~(a). As can be seen, the first gate is non-adjacent. Thus, to achieve NNC-optimality, SWAP gates in front and after the first gate are inserted (see Fig.~\ref{fig:decomp}~(b)). Since each SWAP gate requires 3 quantum gates, this increases the total quantum cost to~$11$ but leads to an NNC value of 0.
\end{example}

In the rest of this paper, this method is denoted by \emph{naive NNC-based decomposition}. Obviously, this straightforward method leads to a high increase in quantum cost. In the next section, more elaborated approaches for synthesizing NNC-optimal circuits are proposed.


\begin{figure}[tb]

\begin{picture}(240,160)

\put(0,90){\includegraphics[scale=.0055]{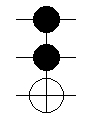}}

\put(35,120){\vector(2,1){40}}
\put(35,135){\footnotesize Standard}
\put(35,100){\vector(2,-1){40}}
\put(39,100){\footnotesize NNC-based}
\put(55,92){\footnotesize (naive)}
\put(20,85){\vector(1,-3){14}}
\put(31,60){\footnotesize NNC-based}
\put(40,53){\footnotesize (exact)}

\put(80,125){\includegraphics[scale=.3]{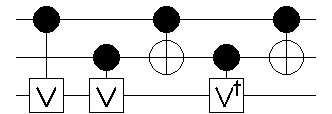}}
\put(128,119){\footnotesize (a)}
\put(178,145){\footnotesize Quantum cost: 5}
\put(178,135){\footnotesize NNC: 1}

\put(80,90){\includegraphics[scale=.3]{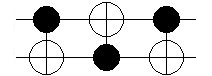}}
\put(128,38){\footnotesize (b)}
\put(82,88){\dashbox{2}(61,26)}
\put(80,45){\includegraphics[scale=.3]{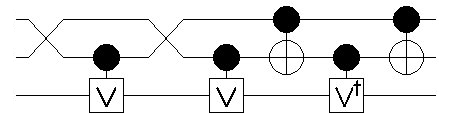}}
\put(94,76){\line(0,1){12}}
\put(86,60){\dashbox{2}(16,16)}
\put(130,76){\line(0,1){12}}
\put(122,60){\dashbox{2}(16,16)}
\put(178,100){\footnotesize Quantum cost: 11}
\put(178,90){\footnotesize NNC: 0}

\put(5,2){\includegraphics[scale=.3]{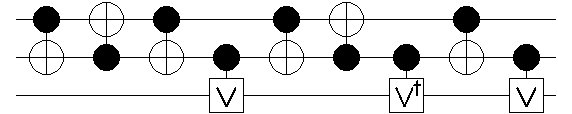}}
\put(80,-5){\footnotesize (c)}
\put(178,22){\footnotesize Quantum cost: 9}
\put(178,12){\footnotesize NNC: 0}

\end{picture}
\caption{Different decompositions of a Toffoli gate}
\label{fig:decomp}
\end{figure}

\section {Improvements} \label {sec:opt}

Two improved approaches for NNC-optimal generation of quantum circuits from reversible logic are introduced. The first one exploits exact synthesis techniques while the second one manipulates the circuit and specification, respectively.


\begin{figure*}[t]
\centering
\includegraphics[scale=0.33]{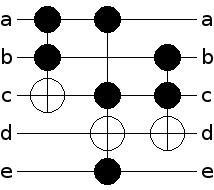}
\vspace{.4cm}
\caption{Circuit of Example \ref {exmacro} }\label{fig:example_circuit}
\end{figure*}
\begin{figure*}[t]
\centering
\subfigure[\label{fig:motivation_reordering1}]{
\includegraphics[scale=.23]{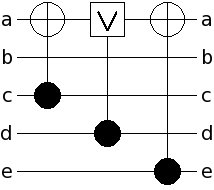}
}~~\subfigure[\label{fig:motivation_reordering2}]{
\includegraphics[scale=.23]{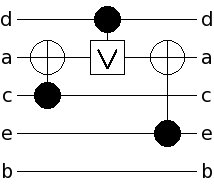}
}
\caption{Reordering circuit lines}
\end{figure*}
\begin{figure*}[t]
\centering
\subfigure[Original circuit\label{fig:reordering1}]{
\includegraphics[scale=.3]{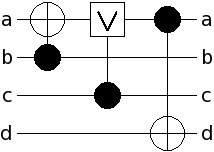}
}\subfigure[Global reordering\label{fig:reordering_global2}]{
\includegraphics[scale=0.37]{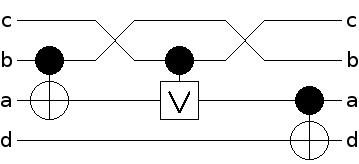}
}
\subfigure[Local reordering\label{fig:reordering_local2}]{
\includegraphics[scale=0.3]{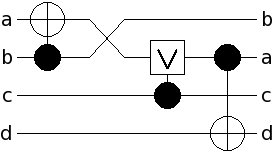}
}
\caption{Global and local reordering}
\end{figure*}

\subsection{Exploiting Exact Synthesis}\label {sec:opt_exact}

A few exact synthesis methods for quantum and reversible circuits have been recently introduced that generate quantum circuits with minimal quantum cost (for examples see \cite {Hung06,GrosseMVL08}). In this section, we propose a synthesis algorithm to construct quantum circuits with \emph{both} minimal quantum cost and minimal NNC.


The developed approach is similar to the one introduced in \cite{GrosseMVL08}.
Here, the synthesis problem is expressed as a sequence of \emph{Boolean satisfiability} (SAT) instances.
For a given function $f$, it is checked if a circuit with $c$ gates realizing $f$ exists. Thereby, $c$ is initially assigned to $1$ and increased in each iteration if no realization is found.



More formally, for a given $c$ and a reversible function $f:\mathbb{B}^n \rightarrow \mathbb{B}^n$, the following SAT instance is created:

$$\Phi \wedge \bigwedge\limits_{i=0}^{2^n-1}([\overrightarrow{inp}_i]_2 = i \wedge [\overrightarrow{out}_i]_2 = f(i))\mbox{,}$$
where
\begin{itemize}
\item $\overrightarrow{inp}_i$ is a Boolean vector representing the inputs of the network to be synthesized for
truth table line~$i$,
\item $\overrightarrow{out}_i$ is a Boolean vector representing the outputs of the network to be synthesized for
truth table line~$i$, and
\item $\Phi$ is a set of constraints representing the synthesis problem for a given gate library.
\end{itemize}
The difference in comparison to~\cite{GrosseMVL08} is, that the constraints in $\Phi$ do not represent the whole set of elementary quantum gates. In fact, a restricted gate library with only adjacent gates is applied.


Although solving the generated SAT instances using a modern SAT solver can produce optimized circuits, the applicability of such an exact method is always limited to relatively small functions due to the exponential search space. In this sense, the proposed method is sufficient to construct
minimal realizations with respect to both quantum cost and NNC for a set of Toffoli and Peres gate configurations as shown in Table~\ref {tab:MacrosTable}.

But nevertheless, these results can be exploited to improve the naive NNC-based decomposition.
Once an exact NNC-optimal quantum circuit for a reversible gate is available (denoted by \emph{macro} in the following), it can be reused as shown by the following example:

\begin{example} \label {exmacro}
Reconsider the decomposition of a Toffoli gate as depicted in Fig. \ref{fig:decomp}. Using the proposed exact synthesis approach, a minimal quantum circuit (with respect to both quantum cost and NNC) as shown in Fig.~\ref{fig:decomp}(c) is determined\footnote{The circuit is minimal with respect to the underlying gate library introduced in Section~\ref{sec:prelim_decomp}. If another library is applied (e.g.~\cite{SK:2003}), better realizations may be possible.}. In comparison to the
naive method (see Fig.~\ref{fig:decomp}(b)), this reduces the quantum cost from 11 to 9 while still ensuring NNC optimality. Furthermore, the realization can be reused as a macro while decomposing larger reversible circuits. For example, consider the circuit shown in Fig.~\ref{fig:example_circuit}. Here, for the second gate the naive method is applied (i.e.~standard decomposition is performed and SWAPs are added), while for the remaining ones the obtained macro is used. This enables a quantum cost reduction from 96 to 92.
\end{example}

In total, we generated 13 macros as listed in Table~\ref {tab:MacrosTable} together with the respective costs in comparison to the costs obtained by using the naive method. As can be seen, exploiting these macros reduces the cost for each gate by up to 63\%. Besides that, also further macros (e.g.~known from literature) can be additionally applied. The effect of these macros on the decomposition of reversible circuits is considered in our experiments in Section~\ref{sec:exp} in detail.

\begin{table}[!t]
	\caption{List of available macros}
	\label{tab:MacrosTable}
	\centering
		\begin{tabular}{|c|c|c|c|c|}
			\hline
			$n$ & Macro & \multicolumn{ 2}{|c|}{Cost} & Impr\\
			    &       &  Naive &Exact & \\
			\hline
			3 & P(\{a,b\},c), P(\{c,b\},a) & 12 & 8 & 33\%\\
			3 & P(\{a,c\},b), P(\{c,a\},b) & 24& 12  & 50\%\\
			4 & P(\{a,b\},d), P(\{d,c\},a) & 30 & 11 & 63\%\\
			3 & t2(\{a,b\},c), t2(\{c,b\},a) & 11 & 9 & 18\%\\
			4 & t2(\{a,b\},d), t2(\{d,c\},a) & 29 & 12 & 59\%\\
			3 & t2(\{a,c\},b) & 17 & 13 & 24\%\\
			4 & t2(\{d,b\},a), t2(\{a,c\},d) & 29 &13 & 55\%\\
			\hline
		\end{tabular}
\end{table}

\subsection{Reordering Circuit Lines}\label {sec:opt_reordering}

Applying the approaches introduced so far leads to an increase in the quantum cost for each non-adjacent gate. 
In contrast, by modifying the ordering of the circuit lines, some of the additional costs can be saved. As an example, consider the circuit in Fig.~\ref {fig:motivation_reordering1} with quantum cost 3 and an NNC value of 6. By reordering the lines as shown in Fig.~\ref{fig:motivation_reordering2}, the NNC value can be reduced to 1 without increasing the total quantum cost.
It is worth noting that manipulating the line order has been previously done to reduce the quantum cost e.g.~in~\cite{MDM:2005,WGDD:2009}.
To determine which lines should be reordered, two heuristic methods are proposed in the following. The former one changes the ordering of the primary inputs and outputs according to a global view while the latter one applies a local view to assign the line ordering.



\subsubsection{Global Reordering}
After applying the standard decomposition introduced in Section~\ref{sec:prelim_decomp}, a cascade of 1- and 2-qubit gates is generated. Now, an ordering of the circuit lines which reduces the total NNC value is desired. To do that, the ``contribution'' of each line to the total NNC value is calculated. More precisely, for each gate $g$ with control line $i$ and target line $j$, the NNC value is determined. This value is added to variables $imp_i$ and $imp_j$ which are used to save the impacts of the circuit lines $i$ and $j$ on the total NNC value, respectively. Next, the line with the highest NNC impact is chosen for reordering and
placed at the middle line (i.e.~swapped with the middle line). If the selected line is the middle line itself, a line with the next highest impact is selected. This procedure is repeated until no better NNC value is achieved. Finally, SWAP operations as described in the previous sections are added for each 
non-adjacent gate. 

\begin{example}\label{example:global_reordering}
Consider the circuit depicted in Fig. \ref{fig:reordering1}. After calculating the NNC contributions, we have $imp_a=1.5$, $imp_b=0$, $imp_c=0.5$, and $imp_d=1$. Thus, lines $a$ (highest impact) and $c$ (middle line) are swapped. Since further swapping does not improve the NNC value, reordering terminates and SWAP gates are added for the remaining non-adjacent gates. The resulting circuit is depicted in Fig.~\ref {fig:reordering_global2} and has quantum cost of 9 in comparison to 21 that results if the naive method is applied.
\end{example}

\subsubsection{Local Reordering}
In order to save SWAP gates, line ordering can also be applied according to a local schema as follows. The circuit is traversed from the inputs to the outputs. As soon as there is a gate $g$ with an NNC value greater than~0, a SWAP operation is added in front of $g$ to enable an adjacent gate. However, in contrast to the naive NNC-based decomposition, no SWAP operation is added after $g$. Instead, the resulting ordering is used for the rest of the circuit (i.e.~propagated through the remaining circuit). This process is repeated until all gates are traversed. 

\begin{example} \label {exm:localordering}
Reconsider the circuit depicted in Fig.~\ref{fig:reordering1}. 
The first gate is not modified since it has an NNC of~0. For the second gate, a SWAP operation is applied to make it adjacent. Afterwards, the new line ordering is propagated to all remaining gates resulting in the circuit shown in Fig.~\ref{fig:reordering_local2}. This procedure is repeated until the whole circuit has been traversed. 
Finally, again a circuit with quantum cost of 9 (in contrast to 21) results.
\end{example}

\section {Experimental Results} \label {sec:exp}

In this section, experimental results are presented.
We evaluated the methods introduced in Section~\ref{sec:nnc_synth} and Section~\ref{sec:opt}, respectively,
by measuring the overhead needed to synthesize circuits with an optimal NNC value of~0. The proposed approaches have been implemented in C++ and applied to the benchmark circuits available at RevLib \cite{WilleMVL08}.
All experiments have been carried out on an AMD Athlon 3500+ with 1~GB of main memory.





The results are shown in Table~\ref{table:nnc}. The first column gives the names of the circuits followed by unique identifiers as used in RevLib. Then, the number of circuit lines (\textsc{n}), the gate count (\textsc{gc}), the quantum cost (\textsc{qc}), and the NNC value of the original (reversible) circuits are shown. The following columns denote the quantum cost of the NNC-optimal circuits obtained by using the naive method 
(\textsc{Naive}), by additionally exploiting macros
(\textsc{+Macros}), and by applying reordering as described in Section~\ref{sec:opt_reordering} (\textsc{Global}, \textsc{Local}, or both), respectively. The next column gives the percentage of the best quantum cost reduction obtained by the improvements in comparison to the naive method (\textsc{Best Imprmnts}). The last column shows 
the smallest overhead in terms of quantum cost needed to achieve NNC-optimality in comparison to the original circuit (\textsc{Overhead to achieve NNC optimality}). All run-times are negligible (i.e.~less than one minute) 
and are omitted in the table.

{\setlength{\tabcolsep}{3pt}
\begin{table*}[t]
\caption{NNC-based synthesis}\label{table:nnc}
\scriptsize
\begin{changemargin}{-2cm}{-2cm}
\begin{tabular}{|r||rrrr||rr|rrr|r||r||}
\hline
\textsc{Benchmark}            & \multicolumn{ 4}{|c||}{\textsc{Original Circuit}} &                                    \multicolumn{ 6}{|c||}{\textsc{Decomposed (NNC-optimal) Circuits}} &            \textsc{Overhead to} \\

           &            &            &            &            &  & & \multicolumn{ 3}{|c|}{\textsc{Reordering}} &       \textsc{Best} & \textsc{achieve NNC} \\

           &            &            &            &            &                 \textsc{Naive} & \textsc{+Macros} &     \textsc{Global} &      \textsc{Local} & \textsc{Glob.+Loc.} &   \textsc{Imprmnts} & \textsc{optimality} \\

           &          \textsc{n} &       \textsc{gc}   &       \textsc{qc}   &      \textsc{NNC}   &        \textsc{qc}   &       \textsc{qc}   &       \textsc{qc}   &       \textsc{qc}   &        \textsc{qc}  &            &            \\
\hline
\hline
 0410184\_169   &         14 &         46 &         90 &         24 &        234 &  {\bf 197} &        234 &        423 &        423 &       16\% &       2,19 \\
\hline
 3\_17\_13   &          3 &          6 &         14 &          3 &         32 &   {\bf 28} &         32 &         32 &         32 &       13\% &       2,00 \\
\hline
 4\_49\_17   &          4 &         12 &         32 &         21 &        158 &        120 &        128 &   {\bf 98} &   {\bf 98} &       38\% &       3,06 \\
\hline
 4gt10-v1\_81   &          5 &          6 &         34 &         41 &        282 &        282 &        258 &        150 &  {\bf 147} &       48\% &       4,32 \\
\hline
 4gt11\_84   &          5 &          3 &          7 &          7 &         49 &         47 &         25 &         22 &   {\bf 16} &       67\% &       2,29 \\
\hline
 4gt12-v1\_89   &          5 &          5 &         42 &         80 &        525 &        525 &        321 &        171 &  {\bf 168} &       68\% &       4,00 \\
\hline
 4gt13-v1\_93   &          5 &          4 &         16 &         26 &        173 &        173 &         77 &         56 &   {\bf 53} &       69\% &       3,31 \\
\hline
 4gt4-v0\_80   &          5 &          5 &         34 &         55 &        366 &        364 &        168 &  {\bf 138} &        141 &       62\% &       4,06 \\
\hline
 4gt5\_75   &          5 &          5 &         21 &         20 &        142 &        138 &        118 &         82 &   {\bf 79} &       44\% &       3,76 \\
\hline
 4mod5-v1\_23   &          5 &          8 &         24 &         25 &        174 &        155 &        114 &   {\bf 78} &   {\bf 78} &       55\% &       3,25 \\
\hline
 4mod7-v0\_95   &          5 &          6 &         38 &         36 &        256 &        256 &        352 &        127 &  {\bf 121} &       53\% &       3,18 \\
\hline
 add16\_174   &         49 &         64 &        192 &         95 &        762 &  {\bf 473} &        762 &       1104 &       1104 &       38\% &       2,46 \\
\hline
 add32\_183   &         97 &        128 &        384 &        191 &       1530 &  {\bf 953} &       1530 &       3744 &       3744 &       38\% &       2,48 \\
\hline
 add64\_184   &        193 &        256 &        768 &        383 &       3066 & {\bf 1913} &       3066 &      13632 &      13632 &       38\% &       2,49 \\
\hline
 add8\_172   &         25 &         32 &         96 &         47 &        378 &  {\bf 233} &        378 &        360 &        360 &       38\% &       2,43 \\
\hline
 aj-e11\_165   &          4 &         13 &         45 &         39 &        280 &        260 &        280 &  {\bf 181} &  {\bf 181} &       35\% &       4,02 \\
\hline
 alu-v4\_36   &          5 &          7 &         31 &         35 &        242 &        238 &        218 &        113 &  {\bf 104} &       57\% &       3,35 \\
\hline
 cnt3-5\_180   &         16 &         20 &        120 &        416 &       2621 &       2591 &       1457 &        731 &  {\bf 728} &       72\% &       6,07 \\
\hline
 cycle10\_2\_110   &         12 &         19 &       1126 &       3368 &      21420 &      21420 &      21420 & {\bf 8046} & {\bf 8046} &       62\% &       7,15 \\
\hline
 decod24-v3\_46   &          4 &          9 &          9 &          9 &         63 &         63 &         39 &   {\bf 21} &         24 &       67\% &       2,33 \\
\hline
 ham15\_108   &         15 &         70 &        453 &       2506 &      15494 &      15390 &      14030 &       2627 & {\bf 2588} &       83\% &       5,71 \\
\hline
 ham7\_104   &          7 &         23 &         83 &        158 &       1035 &       1027 &        657 &        342 &  {\bf 333} &       68\% &       4,01 \\
\hline
 hwb4\_52   &          4 &         11 &         23 &         14 &        107 &         83 &        107 &   {\bf 65} &   {\bf 65} &       39\% &       2,83 \\
\hline
 hwb5\_55   &          5 &         24 &        104 &        119 &        823 &        817 &        595 &  {\bf 337} &        340 &       59\% &       3,24 \\
\hline
 hwb6\_58   &          6 &         42 &        142 &        193 &       1304 &       1160 &       1268 &        614 &  {\bf 545} &       58\% &       3,84 \\
\hline
 hwb7\_62   &          7 &        331 &       2325 &       4236 &      27967 &      27869 &      25939 &      13390 & {\bf 12955} &       54\% &       5,57 \\
\hline
 hwb8\_118   &          8 &        633 &      14260 &      28803 &     187272 &     186880 &     182196 & {\bf 87495} &      87498 &       53\% &       6,14 \\
\hline
 hwb9\_123   &          9 &       1959 &      18124 &      47373 &     304659 &     304540 &     302481 &     124068 & {\bf 124041} &       59\% &       6,84 \\
\hline
 mod5adder\_128   &          6 &         15 &         83 &        154 &       1011 &        978 &        675 &  {\bf 330} &        333 &       67\% &       3,98 \\
\hline
 mod8-10\_177   &          5 &         14 &         88 &        147 &        975 &        969 &        621 &        372 &  {\bf 363} &       63\% &       4,13 \\
\hline
 plus127mod8192\_162   &         13 &        910 &      57400 &     165415 &    1057946 &    1057804 &    1057946 & {\bf 503516} & {\bf 503516} &       52\% &       8,77 \\
\hline
 plus63mod4096\_163   &         12 &        429 &      25492 &      63732 &     407926 &     407784 &     407926 & {\bf 210400} & {\bf 210400} &       48\% &       8,25 \\
\hline
 plus63mod8192\_164   &         13 &        492 &      32578 &      99482 &     633994 &     633852 &     633994 & {\bf 279016} & {\bf 279016} &       56\% &       8,56 \\
\hline
 rd32-v0\_67   &          4 &          2 &         10 &          5 &         38 &   {\bf 19} &         20 &         32 &         20 &       50\% &       1,90 \\
\hline
 rd53\_135   &          7 &         16 &         77 &        124 &        822 &        750 &        702 &        330 &  {\bf 303} &       63\% &       3,94 \\
\hline
 rd73\_140   &         10 &         20 &         76 &        119 &        790 &        739 &        646 &        304 &  {\bf 295} &       63\% &       3,88 \\
\hline
 rd84\_142   &         15 &         28 &        112 &        234 &       1516 &       1465 &       1696 &  {\bf 556} &        586 &       63\% &       4,96 \\
\hline
 sym9\_148   &         10 &        210 &       4368 &      12184 &      77556 &      77556 &      67428 & {\bf 20643} &      25023 &       73\% &       4,73 \\
\hline
 sys6-v0\_144   &         10 &         15 &         67 &         96 &        638 &        587 &        842 &  {\bf 263} &        308 &       59\% &       3,93 \\
\hline
 urf1\_149   &          9 &      11554 &      57770 &     122802 &     794582 &     735170 &     659150 & {\bf 238475} &     238490 &       70\% &       4,13 \\
\hline
 urf2\_152   &          8 &       5030 &      25150 &      45338 &     297178 &     276882 &     297178 & {\bf 101683} & {\bf 101683} &       66\% &       4,04 \\
\hline
 urf3\_155   &         10 &      26468 &     132340 &     331578 &    2121808 &    2038584 &    1933372 & {\bf 596368} &     596371 &       72\% &       4,51 \\
\hline
 urf5\_158   &          9 &      10276 &      51380 &     114784 &     740084 &     706412 &     667484 &     208709 & {\bf 208706} &       72\% &       4,06 \\
\hline
 urf6\_160   &         15 &      10740 &      53700 &     239034 &    1487904 &    1478080 &    1334916 &     320412 & {\bf 320409} &       78\% &       5,97 \\
\hline

\end{tabular}
\end{changemargin}
\vspace{.6cm}
\end{table*}
}

As can be seen, decomposing reversible circuits to have NNC-optimal quantum circuits is costly. Using the naive method, the quantum cost increases by one order of magnitude on average. However, this can be significantly improved if macros or reordering are applied. 
Even if reordering may worsen the results in some few cases (e.g. for local reordering in 0410184\_169 or add64\_184), in total this leads to an improvement of 50\% on average -- in the best case 83\% improvement was observed. Furthermore, since all execution times are negligible, it is feasible to run all decomposition methods and afterwards choose the best circuit as marked bold. As a result, NNC-optimal circuits can be synthesized
with a moderate increase of quantum cost.

\section{Conclusions and Future Work}  \label {sec:conclusion}

In this work, we examined the synthesis of reversible functions with respect to cost metrics beyond gate count and quantum cost. While most of
the previous synthesis approaches only take gate count and quantum cost into account, we showed how the evaluation differs
if other realistic metrics are applied. Furthermore,
by considering NNC as a cost metric for the linear nearest neighbor quantum architecture,
we illustrated how the available synthesis flows can be modified to produce NNC-optimal circuits.
Improvements were suggested that
reduce the quantum cost of the results by up to 83\% (56\% on average) without affecting the NNC-optimality.

For future work, synthesis approaches should be modified with respect to further cost metrics. In particular, quantum-related metrics concurrency and gate distribution have not been addressed by synthesis approaches so far. Therefore, consideration of these metrics is seen as our natural next step. For this purpose, applying exact approaches and line reordering methods similar to those proposed in the paper may be useful.

\section*{Acknowledgment}
Parts of this research work has been supported by the German Research Foundation (DFG) (DR 287/20-1).

\end{document}